%
%
%
%
%
\RequirePackage{fix-cm}
\documentclass[smallextended]{svjour3}       
\smartqed  
\usepackage{graphicx}
%
%
%
%
%
\usepackage{algorithm}
\usepackage{algorithmic}
\usepackage{tikz}
\usetikzlibrary{arrows,shapes, calc, fit, positioning}
\usepackage{amsmath,amssymb,amsfonts,ascmac}
\usepackage{booktabs}
\newcommand{\bbR}{\mathbb{R}}
\newcommand{\bbN}{\mathbb{N}}

\newcommand{\argmax}{\mbox{argmax}}

\usepackage[round]{natbib}


\newcommand{\calA}{\mathcal{A}}

\newcommand{\calU}{U}

\sloppy
\begin{document}

\title{Fair allocation of indivisible goods and chores
}

\titlerunning{Fair allocation of combinations of indivisible goods and chores}        

\author{Haris~Aziz \and Ioannis~Caragiannis \and Ayumi~Igarashi$^{*}$ \and Toby~Walsh
}

\institute{
H. Aziz \at
UNSW Sydney and Data61 CSIRO, Australia\\
\email{haris.aziz@unsw.edu.au}           
           \and
I. Caragiannis \at
Aarhus University, Denmark\\
\email{iannis@cs.au.dk}           
           \and
A. Igarashi$^{*}$ (corresponding author) \at
National Institute of Informatics, Japan\\
\email{ayumi\_igarashi@nii.ac.jp}           
           \and
T. Walsh \at
UNSW Sydney and Data61 CSIRO, Australia\\
\email{tw@cse.unsw.edu.au}           
}

\maketitle

\begin{abstract}
We consider the problem of fairly dividing a set of items. Much of the fair division literature assumes that the items are ``goods'' i.e., they yield positive utility for the agents. There is also some work where the items are ``chores'' that yield negative utility for the agents. In this paper, we consider a more general scenario where an agent may have positive or negative utility for each item. This framework captures, e.g., fair task assignment, where agents can have both positive and negative utilities for each task. We show that whereas some of the positive axiomatic and computational results extend to this more general setting, others do not. We present several new and efficient algorithms for finding fair allocations in this general setting. We also point out several gaps in the literature regarding the existence of allocations satisfying certain fairness and efficiency properties and further study the  complexity of computing such allocations.
\end{abstract}

\newpage
\section{Introduction}
Consider a group of students who are assigned to a certain set of coursework tasks. Students may have subjective views regarding how enjoyable each task is. For some people, solving a mathematical problem may be fulfilling and rewarding. For others, it may be nothing but torture. A student who gets more cumbersome chores may be compensated by giving her some valued goods so that she does not feel hard done by. 

This example can be viewed as an instance of a classic fair division problem. The agents have different preferences over the items and we want to allocate the items to agents as fairly as possible. The twist we consider is that whether an agent has positive or negative utility for an item is subjective. Our setting is general enough to encapsulate two well-studied settings: (1) ``good allocation'' in which agents have positive utilities for the items and (2) ``chore allocation'' in which agents have negative utilities for the items. The setting we consider also covers a third setting (3) ``allocation of objective goods and chores'' in which the items can be partitioned into chores (that yield negative utility for all agents) and goods (that yield positive utility for all agents). Setting (3) covers several scenarios where an agent could be compensated by some goods for doing some chores.

In this paper, we suggest a very simple yet general model of allocation of indivisible items that properly includes chore and good allocation. For this model, we present some case studies that highlight that whereas some existence and computational results can be extended to our general model, in other cases the combination of good and chore allocation poses interesting challenges not faced in subsettings. Our central technical contributions are several new efficient algorithms for finding fair allocations. In particular:
\begin{itemize}
    \item We formalize fairness concepts for the general setting. Some fairness concepts directly extend from the setting of good allocation to our setting. Other fairness concepts such as ``envy-freeness up to one item'' (EF1) and ``proportionality up to one item'' (PROP1) need to be generalized appropriately. 
    \item We show that the round robin sequential allocation algorithm that returns an EF1 allocation for the case of goods does not work in general. Nevertheless, we present a careful generalization of the decentralized round robin algorithm that finds an EF1 allocation when utilities are additive. 
    \item Turning our attention to an efficient and fair allocation, we show that for the case of two agents, there exists a polynomial-time algorithm that finds an EF1 and Pareto-optimal (PO) allocation for our setting. The algorithm can be viewed as an interesting generalization of the Adjusted Winner rule ~\citep{BT96a,BT96b} that is designed for divisible goods. 
    \item Weakening EF1 to PROP1, we show that there exists an allocation that is not only PROP1 but also contiguous (assuming that items are placed in a line). We further give a polynomial-time algorithm that finds such an allocation. 
\end{itemize}

\subsection{Related Work}
Fair allocation of indivisible items is a central problem in several fields including computer science and economics~\citep{BT96a,BCM15a}. Fair allocation has been extensively studied for allocation of divisible goods, commonly known as cake cutting~\citep{BT96a}. 

There are several established notions of fairness, including envy-freeness and proportionality. 
The recently introduced \emph{maximin share} (MMS) notion is weaker than envy-freeness and proportionality and has been heavily studied in the computer science literature. \citet{KPW18} showed that an MMS allocation of goods may not always exist; on the positive side, there exists a polynomial-time algorithm that returns a 2/3-approximate MMS allocation~\citep{KPW18,AMNS17}. Subsequent papers have presented simpler \citep{BK20} or even better \citep{SGHSY18} approximation algorithms for MMS allocations. In general, checking whether there exists an envy-free and Pareto-optimal allocation for goods is $\Sigma_2^p$-complete~\citep{KBKZ09a}.

The idea of envy-freeness up to one good (EF1) was implicit in the paper by \citet{LMMS04a}. Today, it has become a well-studied fairness concept in its own right~\citep{Budi11a}. \citet{CKMPS19} further popularized it, showing that a natural modification of the Nash welfare maximizing rule satisfies EF1 and PO for the case of goods. \citet{BMV17a} recently presented a pseudo-polynomial-time algorithm for computing an allocation that is PO and EF1 for goods. A stronger fairness concept, {\em envy freeness up to the least valued good} (EFX), was introduced by \citet{CKMPS19}.

\citet{Aziz16a} noted that the work on multi-agent chore allocation is less developed than that of goods and that results from one may not necessarily carry over to the other. \citet{ARSW17a} considered fair allocation of indivisible chores and showed that there exists a simple polynomial-time algorithm that returns a 2-approximate MMS allocation for chores. \citet{BK20} presented a better approximation algorithm. \citet{CKKK12} studied the efficiency loss in order to achieve several fair allocations in the context of both good and chore divisions. Allocation of a mixture of goods and chores has received recent attention in the context of divisible items~\citep{BMSY16a,BMSY17a}. Here, we focus on indivisible items.

\paragraph{Subsequent Work.}
Our study of a general setting for goods and chores and our formalization of general definitions for fairness concepts (that apply well to hybrid settings) has spurred further work on the topic. 
In his survey \citet{Moul19a} discusses the subtle differences between the treatment of goods and chores. 
\citet{GaMc20a} and \citet{CGMM21a} focus on algorithms for computing competitive equilibrium for goods and chores. 
\citet{AlWa20a} consider variations of concepts and algorithms that we propose. \citet{AzRe20a} consider a stronger concept of group envy-freeness for goods and chores. \citet{AMS20a} focus on our formulation of PROP1 (that is a weakening of proportionality) and propose a polynomial-time algorithm for computing allocations that are PROP1 and Pareto optimal.

\citet{brczi2020envyfree} pointed out that the natural extension of envy-cycle elimination algorithm of \citet{LMMS04a} does not give an EF1 allocation for a mix of goods and chores. 
\citet{BSV20a} provide further insights into the issue and show that restricting the envy-graph to edges involving maximum envy results in an EF1 allocation for doubly-monotonic valuations that are more general than additive utilities.

\section{Our Model and Fairness Concepts}\label{sec:model}
We now define a fair division problem of indivisible items where agents may have both positive and negative utilities. For a natural number $s \in \bbN$, we write $[s]=\{1,2,\ldots,s\}$. 
An \emph{instance} is a triple $I=(N,O,\calU)$ where
\begin{itemize}
\item $N=[n]$ is a set of {\em agents},
\item $O=\{o_1,o_2,\ldots,o_m\}$ is a set of {\em indivisible items}, and
\item $\calU$ is an $n$-tuple of utility functions $u_i:2^O \rightarrow \bbR$. 
\end{itemize}
Each subset $X \subseteq O$ is referred to as a {\em bundle} of items. We may abuse the notation and write $u_i(o)= u_i(\{o\})$. We note that under this model, an item can be a good for one agent $i$ (i.e., $u_i(o)>0$) but a chore for another agent $j$ (i.e., $u_j(o)<0$). We assume that the utilities in this paper are \emph{additive}, namely, $u_i(X)=\sum_{o \in X}u_i(o)$ for each bundle $X \subseteq O$. We say that agent $i$ \emph{weakly prefers} (respectively, \emph{strictly prefers}) item $o$ to item $o'$ if $u_i(o) \geq u_i(o')$ (respectively, $u_i(o)> u_i(o')$). 
An {\em allocation} $\pi$ is a function $\pi \colon N\rightarrow 2^O$ such that $\bigcup_{i \in N}\pi(i)=O$, and $\pi(i) \cap \pi(j)=\emptyset$ for every pair of distinct agents $i,j \in N$.

We first observe that the definitions of standard fairness concepts can be naturally extended to this general model. The most classical fairness principle is {\em envy-freeness}, requiring that agents do not envy each other. Specifically, given an allocation $\pi$, we say that $i$ {\em envies} $j$ if $u_i(\pi(i)) < u_i(\pi(j))$. An allocation $\pi$ is \emph{envy-free} (EF) if no agent envies the other agents. Another appealing notion of fairness is {\em proportionality}, which guarantees each agent an $1/n$ fraction of her utility for the whole set of items. Formally, an allocation $\pi$ is \emph{proportional} (PROP) if each agent $i \in N$ receives a bundle $\pi(i)$ of utility at least her {\em proportional fair share} $u_{i}(O)/n$. 
The following implication, which is well-known for the case of goods, holds in our setting as well.

\begin{proposition}\label{prop:relation:EF}
For additive utilities, an envy-free allocation satisfies proportionality. 
\end{proposition}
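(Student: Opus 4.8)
The plan is to reuse the standard summation argument that proves the analogous statement for goods, and to observe that the presence of both positive and negative utilities introduces no obstacle. First I would fix an envy-free complete allocation $\pi$ and an arbitrary agent $i \in N$. By the definition of envy-freeness, $i$ does not envy any agent $j \in N$, which means $u_i(\pi(i)) \ge u_i(\pi(j))$ for every $j \in N$; note this also holds trivially when $j = i$.

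Next I would sum these $n$ inequalities over all $j \in N$ to get $n \cdot u_i(\pi(i)) \ge \sum_{j \in N} u_i(\pi(j))$. This is the step where the two hypotheses are used: completeness of $\pi$ guarantees that $\{\pi(j)\}_{j \in N}$ is a partition of $O$, and additivity of $u_i$ then yields $\sum_{j \in N} u_i(\pi(j)) = u_i\bigl(\bigcup_{j \in N}\pi(j)\bigr) = u_i(O)$. Combining the two gives $u_i(\pi(i)) \ge u_i(O)/n$, i.e., agent $i$ receives at least her proportional fair share. Since $i$ was arbitrary, $\pi$ is proportional.

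I do not expect any real obstacle here — the result is essentially a one-line computation. The only point worth flagging is that the familiar slogan ``envy-freeness implies proportionality'' might prompt a worry about whether signs matter in the mixed goods-and-chores model; but both the inequality $u_i(\pi(i)) \ge u_i(\pi(j))$ coming from envy-freeness and the telescoping $\sum_{j} u_i(\pi(j)) = u_i(O)$ are sign-agnostic, so the general setting behaves exactly as the pure-goods case. Additivity, on the other hand, is genuinely needed: without it the identity $\sum_{j} u_i(\pi(j)) = u_i(O)$ can fail, which is precisely why the statement is phrased for additive utilities.
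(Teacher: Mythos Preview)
Your proposal is correct and follows essentially the same approach as the paper: fix an agent $i$, sum the envy-freeness inequalities $u_i(\pi(i)) \ge u_i(\pi(j))$ over all $j \in N$, and use additivity together with completeness to rewrite the right-hand side as $u_i(O)$. Your explicit remarks on why completeness, additivity, and sign-agnosticity make the argument go through are a nice addition, but the core computation is identical to the paper's.
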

\begin{proof}
Suppose that an allocation $\pi$ is an envy-free allocation. Consider any agent $i \in N$. Then, by envy-freeness, $u_i(\pi(i))\geq u_i(\pi(j))$ for all $j\in N$. Thus, by summing up all the inequalities, $n\cdot u_i(\pi(i))\geq \sum_{j\in N}u_i(\pi(j))=u_{i}(O)$. Hence, each $i\in N$ receives a bundle of utility at least $u_{i}(O)/n$, so $\pi$ satisfies proportionality.
\qed\end{proof}

A simple example of one good with two agents already suggests the impossibility in achieving envy-freeness and proportionality. The recent literature on indivisible allocation has, thereby, focused on approximations of these fairness concepts. A prominent relaxation of envy-freeness, introduced by \citet{Budi11a}, is {\em envy-freeness up to one good} (EF1), which requires that an agent's envy towards another bundle can be eliminated by removing some good from the envied bundle. We will present a generalized definition of EF1 for our setting:
the envy can disappear 
by removing either one ``good" from the other's bundle or one ``chore" from their own bundle. Given an allocation $\pi$, we say that $i$ {\em envies} $j$ \emph{by more than one item} if $i$ envies $j$, and $u_i(\pi(i)\setminus \{o\})  < u_i(\pi(j)\setminus \{o\})$ for any item $o \in \pi(i) \cup \pi(j)$. 
		
\begin{definition}[EF1]
An allocation $\pi$ is \emph{envy-free up to one item (EF1)} if for all $i,j \in N$, $i$ does not envy $j$ by more than one item.
\end{definition}

Obviously, envy-freeness implies EF1. \citet{CFS17a} introduced a novel relaxation of proportionality, which is referred to as {\em PROP1}. In the context of good allocation, this fairness relaxation is a weakening of both EF1 and proportionality, requiring that each agent gets her proportional fair share if she obtains one additional good from the others' bundles. Now we will extend this definition to our setting: under our definition, each agent receives her proportional fair share by obtaining an additional good or removing some chore from her bundle. 
			
\begin{definition}[PROP1]
An allocation $\pi$ satisfies \emph{proportionality up to one item (PROP1)} if for each agent $i\in N$, 
\begin{itemize} 
\item $u_i(\pi(i))\geq u_{i}(O)/n$; or
\item $u_i(\pi(i))+u_i(o)\geq u_{i}(O)/n$ for some $o\in O\setminus \pi(i)$; or 
\item $u_i(\pi(i))-u_i(o)\geq u_{i}(O)/n$ for some $o\in \pi(i)$. 
\end{itemize}
\end{definition}
			
We can verify that EF1 implies PROP1.  
\begin{proposition}\label{prop:relation:EF1Prop1}
For additive utilities, an EF1 allocation satisfies PROP1.
\end{proposition}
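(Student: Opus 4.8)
The plan is to fix an arbitrary agent $i$ and verify that one of the three disjuncts in the definition of PROP1 holds for her. First I would dispose of the trivial case: if $u_i(\pi(i)) \geq u_i(O)/n$, the first disjunct holds immediately, so from now on assume $u_i(\pi(i)) < u_i(O)/n$ (this also silently covers $n=1$, which never reaches this point).

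Next I would exploit completeness together with additivity. Since $\bigcup_{j \in N}\pi(j)=O$ and the bundles are disjoint, $\sum_{j \in N} u_i(\pi(j)) = u_i(O)$, hence $\sum_{j \neq i} u_i(\pi(j)) = u_i(O) - u_i(\pi(i)) > \frac{n-1}{n}\,u_i(O)$. An averaging argument over these $n-1$ summands then produces an agent $j^* \neq i$ with $u_i(\pi(j^*)) > u_i(O)/n$; combined with the standing assumption $u_i(\pi(i)) < u_i(O)/n$, this gives $u_i(\pi(j^*)) > u_i(\pi(i))$, so $i$ genuinely envies $j^*$.

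Then I would invoke EF1 on the ordered pair $(i,j^*)$: as $i$ envies $j^*$, there is an item $o^* \in \pi(i) \cup \pi(j^*)$ with $u_i(\pi(i)\setminus\{o^*\}) \geq u_i(\pi(j^*)\setminus\{o^*\})$. Now split on where $o^*$ lies. If $o^* \in \pi(j^*)$, then since the bundles are disjoint $o^* \notin \pi(i)$, so additivity turns the inequality into $u_i(\pi(i)) + u_i(o^*) \geq u_i(\pi(j^*)) > u_i(O)/n$, which is exactly the second disjunct with the item $o^* \in O \setminus \pi(i)$. If instead $o^* \in \pi(i)$, then $o^* \notin \pi(j^*)$ and the inequality reads $u_i(\pi(i)) - u_i(o^*) \geq u_i(\pi(j^*)) > u_i(O)/n$, which is the third disjunct. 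In either case $i$ satisfies PROP1, and since $i$ was arbitrary the allocation is PROP1.

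There is no serious obstacle here; the only things to handle carefully are the averaging step (keeping the strict inequality and noting that $i$ must therefore envy the chosen $j^*$) and the set-difference bookkeeping, where disjointness of the bundles is precisely what lets one drop $\setminus\{o^*\}$ from whichever of $\pi(i),\pi(j^*)$ does not contain $o^*$. It is worth remarking that the generalized form of EF1 — permitting the witnessing item to be removed from the \emph{envier's own} bundle — is exactly what supplies the third disjunct of PROP1, so the two-way case split above mirrors the two ways an EF1 envy can be witnessed.
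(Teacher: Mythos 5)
Your proof is correct, and it takes a somewhat different route from the paper's. The paper argues globally: it defines a single uniform bonus $b_i=\max\{x,y,0\}$, where $x$ is agent $i$'s best item outside her bundle and $y$ the negated value of her worst item inside it, observes that EF1 yields $u_i(\pi(i))+b_i\geq u_i(\pi(j))$ for \emph{every} $j$, and then sums these $n$ inequalities and divides by $n$ — exactly mirroring the proof that EF implies PROP, with the correspondence between $b_i$ and the three PROP1 disjuncts left implicit. You instead localize: after disposing of the case where $i$ already has her fair share, you use completeness and an averaging argument to extract a single agent $j^*$ with $u_i(\pi(j^*))>u_i(O)/n>u_i(\pi(i))$, so that $i$ genuinely envies $j^*$, and then read the PROP1 witness directly off the EF1 witness item for the pair $(i,j^*)$, splitting on whether it lies in $\pi(i)$ or $\pi(j^*)$. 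What your version buys is precision about which disjunct of PROP1 holds and an explicit display of where the two-sided (goods-or-chores) form of EF1 is needed, plus clean handling of edge cases (empty bundles, $n=1$) that the paper's max/min definitions gloss over; what the paper's version buys is brevity, no case analysis over which agent is envied, and a uniform quantitative bound $u_i(\pi(i))+b_i\geq u_i(O)/n$ obtained in one summation. Both arguments rest on the same ingredients — additivity, completeness via $\sum_{j\in N}u_i(\pi(j))=u_i(O)$, and the generalized EF1 definition — so the difference is one of decomposition rather than of substance.
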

\begin{proof}		
The claim clearly holds when $|N| \leq 1$ or $O=\emptyset$; thus suppose $|N| \geq 2$ and $O \neq \emptyset$. Consider any allocation $\pi$ that satisfies EF1, and any agent $i \in N$.

First, consider the case when $\pi(i)=O$. If $u_i(O) \geq 0$, then it is clear that $u_i(\pi(i))=u_i(O) \geq u_i(O)/n$. If $u_i(O) < 0$, consider any $j \in N \setminus \{i\} \neq \emptyset$. Since $\pi$ is EF1, there is an item $o \in \pi(i)$ such that $u_i(\pi(i))- u_i(o) \geq u_i(\pi(j))= u_i(\emptyset)=0$.  Thus, $u_i(\pi(i)) -u_i(o) \geq 0> u_i(O)/n$ for some $o \in \pi(i)$.
    
Next, consider the case when $\pi(i)=\emptyset$. If $u_i(O) \leq 0$, then it is clear that $u_i(\pi(i))=u_i(\emptyset)=0 \geq u_i(O)/n$. Thus, suppose $u_i(O) > 0$. Then, by EF1, for every $j \in N \setminus \{i\}$, $i$ does not envy $j$, or there is an item $o \in \pi(j)$ such that $u_i(\pi(i))= u_i(\emptyset) =0 \geq u_i(\pi(j))- u_i(o)$. Let $o^* \in O$ be such that $o^* \in \argmax_{o \in O}u_i(o)$. Note that $u_i(o^*)>0$ since $u_i(O) > 0$. Then, $u_i(\pi(i)) + u_i(o^*) \geq u_i(\pi(j))$ for every $j  \in N$, which implies that $u_i(\pi(i)) + u_i(o^*) \geq u_i(O)/n$.

Finally, consider the case when when $O\setminus \pi(i) \neq \emptyset$ and $\pi(i) \neq \emptyset$. Let $x=\max_{o\in O\setminus \pi(i)}u_i(o)$ and $y= \min_{o\in \pi(i)}u_i(o)$. Since $\pi$ satisfies EF1, for any agent $j \in N \setminus \{i\}$, 
    \begin{itemize}
        \item $i$ does not envy $j$; or
        \item there exists an item $o \in \pi(j)$ such that $u_i(\pi(i)) \geq u_i(\pi(j)) - u_i(o)$; or 
        \item there exists an item $o \in \pi(i)$ such that $u_i(\pi(i))- u_i(o) \geq u_i(\pi(j))$,
    \end{itemize}
which implies 
\begin{itemize}
    \item $u_i(\pi(i)) \geq u_i(\pi(j))$; or
    \item $u_i(\pi(i)) +x \geq u_i(\pi(j))$; or 
    \item $u_i(\pi(i))- y \geq u_i(\pi(j))$. 
\end{itemize}
Thus, if $i$ gets bonus utility $b^* :=\max \{x,-y,0\}$ by getting some good or removing some chore, her updated utility is such that $u_i(\pi(i)) + b^* \geq u_i(\pi(j))$ for any agent $j \in N \setminus \{i\}$. This would imply that 
$$
n(u_i(\pi(i))+b^*)\geq \sum_{j\in N} u_i(\pi(j))=u_i(O), 
$$
which implies that $u_i(\pi(i))+b^* \geq u_i(O)/n$. Hence PROP1 is satisfied. \qed
\end{proof}

Figure \ref{fig:part-relations} illustrates the relations between fairness concepts introduced above.

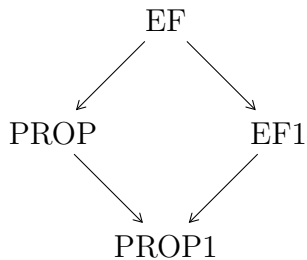
\begin{figure}[h!]
\begin{center}

\scalebox{1}{
\begin{tikzpicture}
\tikzstyle{pfeil}=[->,>=angle 60, shorten >=1pt,draw]
\tikzstyle{onlytext}=[]
\node[onlytext] (EF) at (0,3) {\large EF};
\node[onlytext] (PROP) at (-1.5,1.5) {\large PROP};
\node[onlytext] (PROP1) at (0,0) {\large PROP1};
\node[onlytext] (EF1) at (1.5,1.5) {\large EF1};

\draw[pfeil] (EF) to (EF1);
\draw[pfeil] (EF) to (PROP);
\draw[pfeil] (EF1) to (PROP1);
\draw[pfeil] (PROP) to (PROP1);
\end{tikzpicture}
}
\end{center}	
\caption{Relations between fairness concepts.}
\label{fig:part-relations}
\end{figure}

Besides fairness, we will also consider an efficiency criterion. The most commonly used efficiency concept is {\em Pareto-optimality}. Given an allocation $\pi$, another allocation $\pi'$ is a {\em Pareto-improvement} of $\pi$ if $u_i(\pi'(i)) \geq u_i(\pi(i))$ for all $i \in N$ and $u_j(\pi'(j)) > u_j(\pi(j))$ for some $j \in N$. We say that an allocation $\pi$ is {\em Pareto-optimal} (PO) if there is no allocation that is a Pareto-improvement of $\pi$.		
	
\section{Finding an EF1 Allocation}
In this section, we focus on EF1, a very permissive fairness concept that admits a polynomial-time algorithm in the case of good allocation. For instance, consider a {\em round robin rule} in which agents take turns, and choose their most preferred unallocated item. The round robin rule finds an EF1 allocation if all the items are goods~(see e.g., \citealp{CKMPS19}). By a very similar argument, it can be shown that the algorithm also finds an EF1 allocation if all the items are chores. However, we will show that the round robin rule already fails to find an EF1 allocation if we have some items that are goods and others that are chores. 

\begin{proposition}
The round robin rule does not satisfy EF1.	
\end{proposition}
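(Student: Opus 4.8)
The plan is to \emph{refute} the proposition by exhibiting a concrete instance on which the round robin rule — agents take turns in a fixed order, and on each turn the active agent grabs the still-unallocated item of highest utility to it — returns an allocation that fails EF1. Since EF1 must hold for every ordered pair of agents, it suffices to produce one agent whose envy toward another cannot be eliminated by deleting a single item from the union of the two bundles.

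First I would fix $N=\{1,2\}$ with agent $1$ moving first, and $O=\{o_1,o_2\}$ with $u_1(o_1)=u_2(o_1)=1$ and $u_1(o_2)=u_2(o_2)=-1$, so that $o_1$ is a good and $o_2$ a chore for both agents; in particular the instance lies in the ``objective goods and chores'' sub-setting, which makes the counterexample as strong as possible. On its turn agent $1$ selects its favourite available item, $o_1$; only $o_2$ then remains, and since the round robin rule assigns one item per agent per round until $O$ is exhausted, agent $2$ is handed $o_2$. The output is therefore $\pi(1)=\{o_1\}$ and $\pi(2)=\{o_2\}$.

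Next I would check EF1 for the pair $(2,1)$: we have $u_2(\pi(2))=-1<1=u_2(\pi(1))$, so agent $2$ envies agent $1$, and the only items that could witness EF1 are those in $\pi(1)\cup\pi(2)=\{o_1,o_2\}$. Deleting $o_1$ compares $u_2(\{o_2\})=-1$ with $u_2(\emptyset)=0$; deleting $o_2$ compares $u_2(\emptyset)=0$ with $u_2(\{o_1\})=1$. In both cases agent $2$ still strictly envies agent $1$, so the allocation is not EF1, which contradicts EF1 of the round robin rule.

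The only subtle point — and the one I would take care to nail down — is making the phrase ``the round robin rule'' unambiguous once chores are present: one must specify that an agent on its turn is required to take some item (it cannot pass or stop), so that agent $2$ genuinely ends up holding the chore, and one should note that the run above is forced (there are no ties to break, and no other legal play). I would therefore state the rule explicitly before presenting the example and remark that any reasonable variant that still outputs a complete allocation behaves identically on this instance.
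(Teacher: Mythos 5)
Your proposal is correct and takes essentially the same route as the paper: an explicit counterexample with identical utilities, one good and some chores, on which the (complete-allocation) round robin rule outputs a division that fails the paper's mixed-items EF1 definition; your two-item instance is just a smaller version of the paper's four-item one, and your check of both single-item removals is sound. One small wording fix: you are \emph{proving} the proposition, not refuting it---the statement itself asserts that round robin fails EF1, and your counterexample establishes exactly that.
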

\begin{proof}
Suppose there are two agents and four items with identical utilities described below. 
\begin{center}
	\upshape
	\setlength{\tabcolsep}{6.4pt}
	\begin{tabular}{rccccc}
		\toprule
		&
		\multicolumn{4}{l}{\!\!\!
			\begin{tikzpicture}[scale=0.57, transform shape, every node/.style={minimum size=7mm, inner sep=1.2pt, font=\huge}]	
			\node[draw, circle](2)  at (1.2,0) {$1$};
			\node[draw, circle](3)  at (2.4,0) {$2$};
			\node[draw, circle](4)  at (3.6,0) {$3$};
			\node[draw, circle](5)  at (4.8,0) {$4$};
			\end{tikzpicture}\!\!\!\!
			\vspace{-2pt}
		} \\
		\midrule
		Alice, Bob:\!\! &  2 & -3 & -3 & -3  \\
		\bottomrule
\end{tabular}
\end{center}
\smallskip
Consider the order, in which Alice chooses the only good and then the remaining chores of equal value are allocated accordingly. In that case, Alice gets the positively valued good and one chore, whereas Bob gets two chores. So even if one item is removed from the bundles of Alice or Bob, Bob will still remain envious. \qed
\end{proof}

Nevertheless, a careful adaptation of the round robin method to our setting, which we call the {\em double round robin algorithm}, constructs an EF1 allocation. In essence, the algorithm will apply the round robin method twice: clockwise and anticlockwise. In the first phase, the round-robin algorithm allocates {\em chores} to agents (i.e., the items for which each agent has non-positive utility), while in the second phase, the reversed round-robin algorithm allocates the remaining {\em goods} to agents, in the opposite order starting with the agent who chooses last in the first phase. Intuitively each agent $i$ may envy agent $j$ who comes earlier than her at the end of one phase, but $i$ does not envy $j$ with respect to the items allocated in the other round; hence the envy of $i$ towards $j$ can be bounded up to one item. We present a formal description of the algorithm in Algorithm~\ref{alg:double}; see Figure \ref{fig:double} for an illustration. 
			
\begin{algorithm}                      
\caption{Double Round Robin Algorithm}         
\label{alg:double}                          
\begin{algorithmic}[1]           
\REQUIRE An instance $I=(N,O,\calU)$.
\ENSURE An allocation $\pi$.
\STATE Initialize $\pi(i)=\emptyset$ for each agent $i \in N$. 
\STATE  Partition $O$ into $O^+=\{o\in O\mid \exists i\in N \text{ s.t. } u_i(o)> 0\}$, $O^-=\{o\in O\mid \forall i\in N,  u_i(o) \leq 0\}$ and suppose $|O^-|=an-k$ for some positive integer $a$ and $k\in \{0,,\ldots, n-1\}$. \label{step:partition}
\STATE Create $k$ dummy chores for which each agent has utility $0$, and add them to $O^-$ (hence, $|O^-|=an$).
\STATE Let the agents come in a round robin sequence $(1,2,\ldots, n)^{*}$ and pick their most preferred item in $O^-$ until all items in $O^-$ are allocated.\label{step:chores}
\STATE Let the agents come in a round robin sequence $(n, n-1,\ldots,1)^{*}$ and pick their most preferred item in $O^+$ until all items in $O^+$ are allocated. If an agent has no available item which gives her strictly positive utility, she does not get a real item but pretends to \emph{pick} a dummy one for which she has utility $0$.\label{step:goods}
\STATE Remove the dummy items from the current allocation $\pi$ and return the resulting allocation $\pi^*$.
\end{algorithmic}
\end{algorithm}

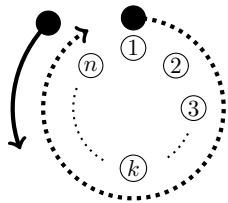
\begin{figure}[htb]
\centering
\begin{tikzpicture}[scale=1.4, transform shape,every node/.style={minimum size=4mm, inner sep=1pt}]
	\def \radius {1cm}
	\node[draw,circle](1) at ({90}:\radius) {$1$};
	\node[draw,circle](2) at ({45}:\radius) {$2$};
	\node[draw,circle](3) at ({0}:\radius) {$3$};
	\draw [dotted,thick] ({-25}:\radius) arc [radius=1, start angle=-25, end angle= -60];
	\draw [dotted,thick] ({240}:\radius) arc [radius=1, start angle=240, end angle= 160];
	
	\node[draw,circle](4) at ({270}:\radius) {$k$};
	\node[draw,circle](6) at ({135}:\radius) {$n$};

         \node[draw,circle,fill=black](A) at ({90}:1.5) {};
	\draw[->,ultra thick,dotted] (A) arc [radius=1.5, start angle=90, end angle= -240];
        
         \node[draw,circle,fill=black](B) at ({135}:2) {};
         \draw[->,ultra thick] (B) arc [radius=2, start angle=135, end angle= 200];
\end{tikzpicture}
\caption{Illustration of Double Round Robin Algorithm. The dotted line corresponds to the picking order when allocating chores. The thick line corresponds to the picking order when allocating goods. The solid black circle indicates the agent who starts the picking. For the dotted (chores) round, agent $1$ is the first agent to pick. For the solid (goods) round, agent $n$ is the first agent to pick.}\label{fig:double}
\end{figure}

In the following, for an allocation $\pi$ and a bundle $X$, we say that $i$ {\em envies} $j$ {\em with respect to} $X$ if $u_i(\pi(i) \cap X) < u_i(\pi(j) \cap X)$. 

\begin{theorem}
For additive utilities, the double round robin algorithm returns an EF1 allocation in $O(\max \{m\log m,mn\})$ time.
\end{theorem}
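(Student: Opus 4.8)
The plan is to verify completeness, then prove the EF1 guarantee by a two-phase exchange argument generalizing the textbook analysis of round robin, and finally account for the running time.

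\textit{Completeness.} By construction $O^{+}$ and $O^{-}$ partition $O$. The chores phase allocates all of $O^{-}$ (the genuine chores together with the $k$ dummy chores), so it remains to see that the goods phase allocates all of $O^{+}$. The point is that every $o\in O^{+}$ is strictly positive for some agent $i$; hence on $i$'s turn, as long as $o$ is still unallocated, $i$'s most preferred available item has strictly positive value, so $i$ removes a \emph{real} item from $O^{+}$. Thus $O^{+}$ strictly shrinks each round and is emptied after at most $|O^{+}|$ rounds. Dummies are discarded at the end and have value $0$ for everyone, so $\pi^{*}$ is a complete allocation and, writing $\pi$ for the allocation before dummies are removed, $u_i(\pi(i))=u_i(\pi^{*}(i))$ for every $i$. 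It therefore suffices to establish EF1 for $\pi$ with the witnessing item required to be real (otherwise we will in fact obtain no envy at all).

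\textit{EF1.} Let $\pi^{-}$ and $\pi^{+}$ be the bundles handed out in the chores and goods phases, so $\pi(i)=\pi^{-}(i)\cup\pi^{+}(i)$ and, by additivity, $u_i(\pi(i))=u_i(\pi^{-}(i))+u_i(\pi^{+}(i))$. Fix agents $i,j$ and split on their relative order. Suppose $i<j$: then $i$ picks before $j$ in the chores phase (sequence $(1,\dots,n)^{*}$) and after $j$ in the goods phase (sequence $(n,\dots,1)^{*}$). The classical argument ``since I pick earlier, I can injectively match each of your picks to an earlier pick of mine of weakly higher value'' --- applied round by round, using that the later picker's relevant item is still available when the earlier picker moves, and reading ``higher value'' as ``less negative'' for chores --- gives $u_i(\pi^{-}(i))\ge u_i(\pi^{-}(j))$, i.e.\ no envy with respect to $O^{-}$. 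Running the same argument on the goods phase but discarding $j$'s very first good $g\in\pi^{+}(j)$ gives $u_i(\pi^{+}(i))\ge u_i(\pi^{+}(j)\setminus\{g\})$. Adding the two inequalities yields $u_i(\pi(i))\ge u_i(\pi(j)\setminus\{g\})$, which is EF1 if $g$ is a real item and witnesses no envy if $g$ is a dummy.

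The case $i>j$ is symmetric with the two phases interchanged: now $i$ picks after $j$ on chores and before $j$ on goods, so the exchange argument gives $u_i(\pi^{+}(i))\ge u_i(\pi^{+}(j))$, while on chores, discarding $i$'s own last pick $c\in\pi^{-}(i)$, it gives $u_i(\pi^{-}(i)\setminus\{c\})\ge u_i(\pi^{-}(j))-u_i(c')$ where $c'\in\pi^{-}(j)$ is $j$'s first chore; since $c'\in O^{-}$ we have $u_i(c')\le 0$, so in fact $u_i(\pi^{-}(i)\setminus\{c\})\ge u_i(\pi^{-}(j))$. Adding gives $u_i(\pi(i)\setminus\{c\})\ge u_i(\pi(j))$, again EF1 (or no envy if $c$ is a dummy). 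Two routine points must be checked en route: the chores phase runs a whole number of rounds thanks to the padding in Step~\ref{step:chores}, but the goods phase may stop mid-round, so within a pair one agent may take one more good than the other --- this only strengthens the inequalities, since the extra pick has value $\ge 0$ --- and whenever a picker pretends to take a dummy it has value $0$ while every real item still available to her is nonpositive for her, so the availability comparisons above remain valid. For the running time, the partition in Step~\ref{step:partition} costs $O(mn)$; at most $m$ agents ever receive a real item, so presorting each such agent's values (or scanning) handles all $O(m)$ real picks within $O(m^{2}\log m)$, while each of the $O(mn)$ dummy picks (at most $|O^{+}|\le m$ goods rounds of $n$ picks each, plus at most $n$ dummy chores) is processed in amortized constant time after a one-off $O(m)$ scan per agent; summing gives $O(\max\{m^{2}\log m,mn\})$.

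\textit{Main obstacle.} The substantive work is the EF1 analysis: one must pin down, for each ordered pair, which phase can create envy and which is ``safe,'' set up the availability/exchange comparison correctly in each phase \emph{and} each direction, and then combine a \emph{no-envy} bound from one phase with an \emph{EF1} bound from the other using additivity --- a combination that goes through precisely because a chore has nonpositive value (so the dropped chore absorbs the slack $-u_i(c')$) and a good has nonnegative value (so dropping a good from $j$, or keeping the possibly extra good of $i$, never hurts $i$). The dummy items and the ragged last round of the goods phase are the bookkeeping details one has to handle with care but which do not affect the core argument.
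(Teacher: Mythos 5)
Your proof is correct and follows essentially the same route as the paper's: a two-phase matching argument giving no envy in one phase and envy up to one item in the other (dropping the later goods-picker's first good, or the later chores-picker's own last chore), combined via additivity, with the same overall time bound. Your extra bookkeeping for dummy picks, the ragged final goods round, and the sign of the dropped items makes explicit some details the paper treats more briefly, but it does not change the argument.
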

\begin{proof}
We note that the algorithm ensures that all agents receive the same number of chores, by introducing $k$ dummy chores. Now let $\pi$ be the output of Algorithm \ref{alg:double}. To see that $\pi$ satisfies EF1, consider any pair of two agents $i$ and $j$ where $i<j$. We will show that by removing one item, these agents do not envy each other. We denote by $c^{i}_t$ and $c^{j}_t$ the $t$-th items allocated to agent $i$ and agent $j$ for $t=1,2,\ldots,a$ in Line \ref{step:chores}, respectively. We denote by $g^{i}_t$ and $g^{j}_t$ the $t$-th items allocated to agent $i$ and agent $j$ for $t=1,2,\ldots,b$ in Line \ref{step:goods}, respectively, where $b$ denotes the number of rounds each agent chooses an item (including a dummy item) in Line \ref{step:goods}. 

First, consider $i$'s envy for $j$. We first observe that the $t$-th item $c^{i}_t$ in $O^-$ allocated to $i$ is weakly preferred by $i$ to the $t$-th item $c^{j}_t$ in $O^-$ allocated to $j$. Hence, agent $i$ does not envy $j$ with respect to $O^-$. Namely, 
\begin{align}
&u_i(\pi(i) \cap O^-)= \sum^a_{t=1}u_i(c^{i}_t) \geq \sum^a_{t=1}u_i(c^{j}_t) =u_i(\pi(j) \cap O^-). \label{eq:chore}
\end{align}
As for the good allocation, agent $i$ may envy agent $j$ with respect to $O^+$. But if the first item $g^{j}_1$ picked by $j$ from $O^+$ is removed from $j$'s bundle, then the envy will
disappear,
i.e., $i$ does not envy $j$ with respect to $O^+\setminus \{g^{j}_1\}$. Namely, 
\begin{align}
&u_i(\pi(i) \cap O^+)= \sum^b_{t=1}u_i(g^{i}_t) \geq \sum^b_{t=2}u_i(g^{j}_t) = u_i((\pi(j) \cap O^+)\setminus \{g^{j}_1\}). \label{eq:good} 
\end{align}
The reason is that for each item $g^{j}_t$ picked by $j$ where $t=2,3,\ldots,b$, there is a corresponding item $g^{i}_{t-1}$ picked by $i$ before $j$'s turn that is weakly as preferred by $i$ to $g^{j}_t$. Combining \eqref{eq:chore} and \eqref{eq:good} yields $u_i(\pi(i)) \geq u_i(\pi(j)\setminus \{g^{j}_1\})$.

Second, consider $j$'s envy for $i$. Similarly to the above, agent $j$ does not envy agent $i$ with respect to $O^+$ because agent $j$ takes the first pick among $i$ and $j$; that is, for every item in $g^{i}_t$ chosen by $i$, agent $j$ picks an item $g^{j}_t$ before $i$ that she weakly prefers to $g^{i}_t$. Thus, 
\begin{align}
&u_j(\pi(j) \cap O^+)= \sum^b_{t=1}u_j(g^{j}_t) \geq \sum^b_{t=1}u_i(g^{i}_t) =u_j(\pi(i) \cap O^+). \label{eq:good:ji} 
\end{align}
As for the items in $O^-$, for each item $c^{i}_{t}$ picked by $i$ where $t=2,3,\ldots,a$, there is an item $c^{j}_{t-1}$ picked by $j$ before $i$ that $j$ weakly prefers to $c^{i}_{t}$, which implies that $j$ does not envy $i$ with respect to $O^-\setminus \{c^{j}_a\}$. Thus 
\begin{align}
&u_j(\pi(j)\setminus \{c^{j}_a\})= \sum^{a-1}_{t=1}u_j(c^{j}_t) \geq \sum^a_{t=2}u_j(c^{i}_t) \geq \sum^a_{t=1}u_j(c^{i}_t) = u_j(\pi(i)). \label{eq:chore:ji} 
\end{align}
Note that the last inequality holds since $u_j(c^{i}_1) \leq 0$. Combining \eqref{eq:good:ji} and \eqref{eq:chore:ji} yields $u_j(\pi(j)\setminus \{c^{j}_a\}) \geq u_j(\pi(i))$.

In either case, agents do not envy each other by more than one item. We conclude that $\pi$ is EF1 and so does the final allocation $\pi^*$ as removing dummy items does not affect the utilities of each agent. 

It remains to analyze the running time of Algorithm \ref{alg:double}. Line \ref{step:partition} requires $O(mn)$ time as each item needs to be examined by all agents. Lines \ref{step:chores} and \ref{step:goods} require $O(m\log m)$ time as there are at most $m$ iterations, and for each iteration, each agent has to choose the most preferred item out of at most $m$ items, which can be done by sorting all the items according to the preference of each agent at the beginning. Thus, the total running time can be bounded by $O(\max \{m\log m,mn\})$, which completes the proof.\qed
\end{proof}

\section{Finding an EF1 and PO allocation}
We move on to the next question as to whether fairness is achievable together with efficiency. In the context of good allocation where agents have non-negative additive utilities, \citet{CKMPS19} proved that an outcome that maximizes the {\em Nash welfare} (i.e., the product of utilities) satisfies EF1 and Pareto-optimality simultaneously. The question regarding whether a Pareto-optimal and EF1 allocation exists for chores is unresolved. Starting from an EF1 allocation and finding Pareto improvements, one runs into two challenges: first, Pareto improvements may not necessarily preserve EF1; second, finding Pareto improvements is NP-hard~\citep{ABL+16a,KBKZ09a}. Even if we ignore the second challenge, the question regarding the existence of a Pareto-optimal and EF1 allocation for chores is open. Next, we show that the problem of finding an EF1 and Pareto-optimal allocation is completely resolved for the restricted but important case of two agents. The main theorem in this section is stated as follows. 

\begin{theorem}\label{thm:AW}
For two agents with additive utilities, a Pareto-optimal and EF1 allocation always exists and can be computed in $O(m^2)$ time.
\end{theorem}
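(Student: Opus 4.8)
First I would reduce the problem to the ``contested'' items and then imitate the Adjusted Winner rule by sweeping a price. Classifying items by the sign pattern of $(u_1(o),u_2(o))$: if $u_1(o)\ge 0\ge u_2(o)$ and the two values are not both zero, then moving $o$ from agent~$2$ to agent~$1$ is a Pareto improvement, so in every Pareto-optimal allocation such an item goes to agent~$1$; symmetrically, items with $u_2(o)\ge 0\ge u_1(o)$ go to agent~$2$. Hence the only freedom is how to split the set $Q=\{o : u_1(o)u_2(o)>0\}$ of items that are goods for both agents or chores for both; for $o\in Q$ set $r(o)=u_1(o)/u_2(o)>0$ and sort $Q$ by $r$.

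Next I would build a chain of candidate allocations. For a threshold $t\ge 0$ let $\pi_t$ give agent~$1$ exactly the items $o$ with $u_1(o)>t\,u_2(o)$ (zero-valued items by a fixed tie-break). For every $t$ this keeps the forced items on the correct side, while within $Q$ agent~$1$ keeps a good iff $r(o)>t$ and a chore iff $r(o)<t$. As $t$ runs through the sorted values $r(o)$, and splitting ties so tied items move one at a time, I get allocations $\pi^0,\pi^1,\ldots,\pi^L$ ($L=|Q|$) in which consecutive allocations differ by a single item---a good for both leaving agent~$1$ or a chore for both joining agent~$1$---with $u_1(\pi^\ell(1))$ non-increasing and $u_2(\pi^\ell(2))$ non-decreasing in $\ell$. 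Each $\pi^\ell$ maximizes the linear welfare $u_1(\pi(1))+t\,u_2(\pi(2))$ for the relevant $t>0$, hence is Pareto-optimal (the two extreme allocations, $t=0$ and $t\to\infty$, are made Pareto-optimal by a lexicographic tie-break). Moreover $\pi^0$ gives agent~$1$ every item she does not dislike, so $u_1(\pi^0(1))\ge 0\ge u_1(\pi^0(2))$ and agent~$1$ is not envious at $\pi^0$; symmetrically agent~$2$ is not envious at $\pi^L$. Finally, no $\pi^\ell$ has mutual envy, since swapping the two bundles of an allocation with mutual envy would strictly increase the linear welfare that $\pi^\ell$ maximizes.

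Then I would locate an EF1 allocation in the chain. Let $p$ be the largest index with agent~$1$ not envious at $\pi^p$ and $q$ the smallest index with agent~$2$ not envious at $\pi^q$; by monotonicity agent~$1$ is not envious exactly on $\{0,\ldots,p\}$ and agent~$2$ exactly on $\{q,\ldots,L\}$, and because mutual envy is impossible, $q\le p+1$. If $q\le p$, any $\pi^\ell$ with $q\le\ell\le p$ is envy-free and hence EF1. Otherwise $q=p+1$: agent~$1$ is fully satisfied at $\pi^p$, agent~$2$ at $\pi^{p+1}$, and the two allocations differ only by the single transferred item $o=o_{p+1}$; I must show that one of $\pi^p,\pi^{p+1}$ is EF1, i.e., that agent~$2$'s envy at $\pi^p$ or agent~$1$'s envy at $\pi^{p+1}$ can be removed by deleting one item. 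I expect this crossover analysis to be the main obstacle. One argues by cases on whether $o$ is a good or a chore for both agents, and on which item is deleted in the EF1 test (the good most valued by the envious agent in the envied bundle, or the most negatively valued item in the envious agent's own bundle), using that $\pi^p$ and $\pi^{p+1}$ differ by exactly one item together with the ratio ordering---so that $o=o_{p+1}$ is, in the relevant sense, the least efficient transfer still available---to bound how far $u_1(\pi^p(1))$ can exceed $u_1(O)/2$ and how little $u_2(\pi^{p+1}(2))$ can exceed $u_2(O)/2$; a careful choice of the deleted item then gives the inequality $u_i(\pi(i))\ge (u_i(O)-b_i)/2$, with $b_i\ge 0$ the best single-item adjustment for agent~$i$, which is equivalent to EF1 for two agents. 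For the running time: sorting $Q$ costs $O(m\log m)$, the chain has at most $m+1$ allocations built incrementally, and checking EF1 at each (two bundle sums, the best removable good and worst removable chore, the inequalities) costs $O(m)$, for a total of $O(m^2)$.
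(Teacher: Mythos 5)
Your construction is essentially the paper's: after giving each subjective item to the agent who weakly likes it, the chain $\pi^0,\dots,\pi^L$ obtained by sweeping the threshold $t$ over the ratio order is exactly the sequence of allocations produced by the paper's discrete Adjusted Winner (goods for both migrate one way, chores for both the other, in order of $|u_\ell(o)|/|u_w(o)|$), and your justification of Pareto-optimality via maximization of $u_1(\pi(1))+t\,u_2(\pi(2))$ with $t>0$ is a clean, valid alternative to the paper's direct ratio-based contradiction. The reduction to the crossover pair $\pi^{p},\pi^{p+1}$ using no-mutual-envy and the monotonicity of $u_1(\pi^\ell(1))$ is also sound.

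The genuine gap is exactly the step you flag as ``the main obstacle'': you never prove that one of $\pi^{p}$, $\pi^{p+1}$ is EF1, and the route you sketch (bounding $u_1(\pi^{p}(1))$ and $u_2(\pi^{p+1}(2))$ against proportional shares using the ratio ordering and the ``least efficient transfer'') is neither carried out nor the natural way to close it; in fact the ratio ordering is not needed at this point at all. The missing idea is a one-item exchange argument that uses only the Pareto-optimality you already established and the fact that the two allocations differ in a single item $o$. Say $o$ is a good for both and moves from agent $1$ to agent $2$, and write $X=\pi^{p}(1)\setminus\{o\}=\pi^{p+1}(1)$ and $Y=\pi^{p}(2)=\pi^{p+1}(2)\setminus\{o\}$. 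If $\pi^{p}$ is not EF1, then (agent $1$ being non-envious there) agent $2$'s envy survives even after deleting $o$ from agent $1$'s bundle, i.e.\ $u_2(Y)<u_2(X)$; if $\pi^{p+1}$ is not EF1, then likewise deleting $o$ from agent $2$'s bundle does not help agent $1$, i.e.\ $u_1(X)<u_1(Y)$. But then giving $Y\cup\{o\}$ to agent $1$ and $X$ to agent $2$ strictly improves both agents relative to $\pi^{p}$, contradicting its Pareto-optimality; the chore case is symmetric. This is precisely how the paper finishes (phrased there as: if the loser's envy at the penultimate allocation and the winner's envy at the final allocation both exceed one item, swapping the unchanged parts together with the last transferred item is a Pareto improvement). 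Without this step your argument does not establish EF1; with it, your chain construction and the $O(m^2)$ accounting go through.
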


Our algorithm for the problem can be viewed as a discrete version of the well-known Adjusted Winner (AW) rule~\citep{BT96a,BT96b}. Just like the Adjusted Winner rule, our algorithm finds a Pareto-optimal and EF1 allocation. In contrast to AW, which is designed for goods, our algorithm can handle both goods and chores. 

The algorithm begins by giving each subjective item to the agent who considers it as a good. 
So, in the following, we assume that we have objective items only, i.e., for each item $o \in O$, either $o$ is a good ($u_i(o)>0$ for each $i \in N$); or $o$ is a chore ($u_i(o)<0$ for each $i \in N$).
Now we call one of the two agents {\em winner} (denoted by $w$) and another {\em loser} (denoted by $\ell$). 
\begin{enumerate}
\item Initially, all goods are allocated to the winner and all chores to the loser. 
\item We sort the items in terms of $|u_{\ell}(o)|/|u_w(o)|$ (monotone non-increasing order), and consider reallocation of the items according to the ordering (from the left-most to the right-most item). 
\item When considering a good, we move it from the winner to the loser. When considering a chore, we move it from the loser to the winner. We stop when the loser does not envy the winner by more than one item. 
\end{enumerate}
We present a formal description of the algorithm in Algorithm \ref{alg:AW}.

\begin{algorithm}                      
\caption{Generalized Adjusted Winner Algorithm}         
\label{alg:AW}                          
\begin{algorithmic}[1]           
\REQUIRE An instance $I=(N,O,\calU)$ where $N=\{w,\ell \}$.
\ENSURE An allocation $\pi$.
\STATE Initialize $\pi(i)=\emptyset$ for each agent $i \in N$. 
\STATE Let $O^*_{w}=\{\, o \in O \mid u_{w}(o)\geq 0~\mbox{and}~u_{\ell}(o) \leq 0 \,\}$ and $O^*_{\ell}=\{\, o \in O \mid u_{\ell}(o)\geq 0~\mbox{and}~u_{w}(o) < 0  \,\}$. 
\STATE Let $O^+=\{\, o \in O  \mid u_i(o) > 0\quad \forall i \in N\,\}$ and 
$O^-=\{\,o \in O \mid u_i(o) <0\quad  \forall i \in N\,\}$. 
\STATE For each item $o \in O^+ \cup O^*_w$, allocate $o$ to agent $w$. For each item $o \in O^{-} \cup O^*_{\ell}$, allocate $o$ to agent $\ell$.\label{line:initialization}
\STATE Sort the items in $O^+ \cup O^-=\{o_1,o_2,\ldots,o_r \}$ where 
$|u_{\ell}(o_1)|/|u_w(o_1)| \ge |u_{\ell}(o_2)|/|u_w(o_2)| \ge \cdots \ge |u_{\ell}(o_r)|/|u_w(o_r)|$. 
\STATE Set $t=1$. 
\WHILE{agent $\ell$ envies agent $w$ by more than one item} 
\IF{$o_t \in O^+$}\label{line:while}
\STATE Set $\pi(w)=\pi(w) \setminus \{o_t\}$ and $\pi(\ell)=\pi(\ell) \cup \{o_t\}$. 
\ELSIF{$o_t \in O^-$}
\STATE Set $\pi(w)=\pi(w) \cup \{o_t\}$ and $\pi(\ell)=\pi(\ell) \setminus \{o_t\}$. 
\ENDIF
\STATE Update $t = t +1$. 
\ENDWHILE 
\end{algorithmic}
\end{algorithm}

We will first prove that at any point of the algorithm, the allocation $\pi$ is Pareto-optimal, and so is the final allocation. 
\begin{lemma}\label{lem:PO:AW}
During the execution of Algorithm \ref{alg:AW}, the allocation $\pi$ is Pareto-optimal at any point after Line \ref{line:initialization}. 
\end{lemma}
\begin{proof}
It can be easily verified that the allocation $\pi$ just after Line \ref{line:initialization} is Pareto-optimal. 
Thus, consider some time step after the algorithm enters the {\bf while}-loop of Line \ref{line:while}. Assume towards a contradiction that $\pi'$ is a Pareto-improvement of $\pi$. We assume without loss of generality that all items in $O^*_w$ remain assigned to $w$ under $\pi'$ because transferring an item in $O^*_w$ from $w$ to $\ell$ improve neither the utility of $w$ nor that of $\ell$. Likewise, we assume that all items in $O^*_{\ell}$ remain assigned to $\ell$ under $\pi'$.

In the following, we call each item $o \in O^+$ a \emph{good} and item $o \in O^-$ a \emph{chore}. 
For each $i,j \in \{ w,\ell\}$ with $i \neq j$, let
\begin{itemize}
\item $G_{ii}$ be the set of goods in $\pi(i) \cap \pi'(i)$;
\item $C_{ii}$ be the set of chores in $\pi(i) \cap \pi'(i)$;
\item $G_{ij}$ be the set of goods in $\pi(i) \cap \pi'(j)$;
\item $C_{ij}$ be the set of chores in $\pi(i) \cap \pi'(j)$.
\end{itemize}

Consider first the case when in $\pi$, the winner has utility which is at least as high as in $\pi'$, while the loser is strictly better off. Taking into account that the bundles of goods $G_{ww}$ and $G_{\ell \ell}$ and the bundles of chores $C_{ww}$ and $C_{\ell \ell}$ are allocated to the same agent in both allocations, this means
\begin{align}\label{eq:pareto-dom1}
u_{w}(G_{\ell w})+u_{w}(C_{\ell w})-u_{w}(G_{w \ell})-u_{w}(C_{w \ell}) &\geq 0;~\mbox{and}\\\label{eq:pareto-dom2}
u_{\ell}(G_{w \ell})+u_{\ell}(C_{w \ell})-u_{\ell}(G_{\ell w})-u_{\ell}(C_{\ell w}) &>0
\end{align}

The crucial observation now is that the algorithm considered all items in $G_{\ell w}$ and $C_{w \ell}$ before the items in $G_{w \ell}$ and $C_{\ell w}$ in the ordering. 
Indeed, recall that all the goods are initially assigned to the winner, $G_{\ell w} \subseteq \pi(\ell)$, and $G_{w \ell} \subseteq \pi(w)$. Thus the goods in $G_{\ell w}$ are those transferred from the winner $w$ to the loser $\ell$ in the {\bf while}-loop of Line \ref{line:while}, while the goods in $G_{w \ell}$ are those that stay in the winner's bundle. Similarly, recall that all the chores are initially assigned to the loser, $C_{w \ell} \subseteq \pi(w)$, and $C_{\ell w} \subseteq \pi(\ell)$. Thus, the chores in $C_{w \ell}$ are those transferred from the loser $\ell$ to the winner $w$, while the chores in $C_{\ell w}$ are those that stay in the loser's bundle.
Now, let $\alpha$ be such that
\begin{align*}
\max_{o\in G_{w \ell}\cup C_{\ell w}}{|u_{\ell}(o)|/|u_{w}(o)|} \leq \alpha \leq \min_{o\in G_{\ell w}\cup C_{w \ell}}{|u_{\ell}(o)|/|u_{w}(o)|}.
\end{align*}
This definition implies the inequalities, 
\begin{align*}
u_{\ell}(G_{w \ell})\leq \alpha u_{w}(G_{w \ell}); 
u_{\ell}(G_{\ell w})\geq \alpha u_{w}(G_{\ell w}); \\
-u_{\ell}(C_{w \ell})\geq -\alpha u_{w}(C_{w \ell});
-u_{\ell}(C_{\ell w})\leq -\alpha u_{w}(C_{\ell w}),
\end{align*}
which, together with inequality (\ref{eq:pareto-dom2}), yield
\begin{align*}
0 &< u_{\ell}(G_{w \ell}) + u_{\ell}(C_{w \ell})- u_{\ell}(G_{\ell w}) - u_{\ell}(C_{\ell w}) \\
&\leq-\alpha(u_{w}(G_{\ell w})+u_{w}(C_{\ell w})-u_{w}(G_{w \ell}) - u_{w}(C_{w \ell}))\leq 0,
\end{align*}
a contradiction. The last inequality follows by (\ref{eq:pareto-dom1}) and by the fact that $\alpha$ is non-negative. A similar argument applies when in $\pi$, the loser has utility which is at least as high as in $\pi'$, while the winner is strictly better off. \qed
\end{proof}

We are now ready to prove Theorem \ref{thm:AW}. 

\begin{proof}[of Theorem \ref{thm:AW}]
We will prove that the final output $\pi$ of Algorithm \ref{alg:AW} satisfies EF1. Together with Lemma \ref{lem:PO:AW}, this proves the desired claim. 
Now observe that at the final allocation $\pi$, at most one agent envies the other: if the loser still envies the winner and the winner also envies the loser, then exchanging the bundles would result in a Pareto improvement, contradicting Lemma~\ref{lem:PO:AW}. Thus, $\pi$ is EF1 when the loser envies the winner at $\pi$. Consider when at $\pi$, the loser does not envy the winner but the winner envies the loser. Let $\pi'$ be the previous allocation just before the final transfer in the {\bf while}-loop of Line \ref{line:while}. Let $W=\pi'(w) \cap \pi(w)$ and $L=\pi'(\ell) \cap \pi(\ell)$. Namely, $W$ (respectively, $L$) is the set of items in the winner's bundle (respectively, the loser's bundle) excluding the transferred item at $\pi'$ and $\pi$. By construction, the loser envies the winner by more than one item at $\pi'$, which implies $u_{\ell}(L) < u_{\ell}(W)$. Suppose towards a contradiction that the winner envies the loser by more than one item at $\pi$, which implies $u_{w}(W) < u_{w}(L)$.

\begin{itemize}
    \item If $g$ is the last good that has been moved from the winner to the loser, then allocating $W$ to $\ell$ and $L \cup \{g\}$ to $w$ would be a Pareto-improvement of $\pi'$, a contradiction. 
    \item If $c$ is the last chore that has been moved from the loser to the winner, then allocating $W\cup \{c\}$ to $\ell$ and $L$ to $w$ would be a Pareto-improvement of $\pi'$, a contradiction.
\end{itemize}
Hence, the winner does not envy the loser by more than one item at $\pi$; we conclude that $\pi$ is EF1.

It remains to analyze the running time of the algorithm. First, the items can be sorted in $O(m \log m)$ time. The adjustment process takes $O(m^2)$ time. Each iteration checks the allocation is EF1 from the view point of the loser, which requires at most $m$ comparisons of utilities, and there are at most $m$ iterations. Thus, the number of operations is bounded by $O(m^2)$.\qed
\end{proof}

The example below illustrates our discrete adaptation of AW.
\begin{example}[Example of the generalized AW]
Consider two agents, Alice and Bob, and five items with the following additive utilities where the gray circles correspond to goods and the white circles correspond to chores. 
\begin{center}
	\upshape
	\setlength{\tabcolsep}{6.4pt}
	\begin{tabular}{rcccccccc}
		\toprule
		&
		\multicolumn{7}{l}{\!\!\!
			\begin{tikzpicture}[scale=0.57, transform shape, every node/.style={minimum size=7mm, inner sep=1.3pt, font=\huge}]	
			\node[draw, circle, fill=gray!70](1)  at (1.2,0) {$1$};
			\node[draw, circle](2)  at (2.4,0) {$2$};
			\node[draw, circle,fill=gray!70](3)  at (3.6,0) {$3$};
			\node[draw, circle, fill=gray!70](4)  at (4.8,0) {$4$};
			\node[draw, circle](5)  at (6,0) {$5$};
			\node[draw, circle](6)  at (7.3,0) {$6$};
			\node[draw, circle](7)  at (8.6,0) {$7$};
			\end{tikzpicture}\!\!\!\!
			\vspace{-2pt}
		} \\
		\midrule
		Alice (winner) :\!\! & 1 & -1 & 2 & 1 &-2&-4& -6 \\
		Bob (loser) :\!\! &  4 & -3 & 6 & 2 & -2 & -2  & -2\\
		$|u_{\ell}(o)|/|u_w(o)|$  :\!\! &  4 & 3 & 3 & 2 & 1 & 1/2  & 1/3\\
		\bottomrule
\end{tabular}
\end{center}
The generalized AW initially allocates the goods to Alice and the chores to Bob. Then, it transfers the first good from Alice to Bob and moves the second chore from Bob to Alice. After moving the third good from Alice to Bob, Bob stops being envious by more than one item. Hence the final allocation gives the items $2$ and $4$ to Alice and the rest to Bob.\qed
\end{example}
			
A natural question is whether PO and EF1 allocations exist for three or more agents; we leave this an an interesting open question. We remark that Pareto-optimality by itself is easy to achieve in $O(nm)$ time. It suffices to give each item to the agent who values it the most.

\section{Finding a Connected PROP1 Allocation}
We saw that there always exists an EF1 allocation for subjective goods and chores. If we weaken EF1 to PROP1, one can achieve one additional requirement besides fairness, that is, {\em connectivity}. In this section, we will consider a situation when items are placed on a path, and each agent desires a connected bundle of the path. Finding a connected set of items is important in many scenarios. For example, the items can be a set of rooms in a corridor and the agents could be research groups where each research group wants to get adjacent rooms (see e.g., \citealp{BCE17a,BCFIMPVZ19}). 

We will show that a connected PROP1 allocation exists and can be found efficiently. In what follows, we assume that the path is given by a sequence of items $(o_1,o_2,\ldots,o_m)$. Formally, we say that an allocation $\pi$ is \emph{connected} if for each agent $i \in N$, $\pi(i)$ is connected in the path $(o_1,o_2,\ldots,o_m)$. We will consider a slightly more stringent notion of PROP1: A connected allocation $\pi$ is PROP1$_{outer}$ if for each agent $i \in N$, 
\begin{itemize} 
\item agent $i$ receives a bundle of utility at least her proportional fair share, i.e., $u_i(\pi(i))\geq u_{i}(O)/n$, or
\item $u_i(\pi(i))+u_i(o)\geq u_{i}(O)/n$ for some item $o\in O\setminus \pi(i)$ such that $\pi(i) \cup \{o\}$ is connected; or 
\item $u_i(\pi(i))-u_i(o)\geq u_{i}(O)/n$ for some $o\in \pi(i)$ such that $\pi(i) \setminus \{o\}$ is connected. 
\end{itemize}

We first prove a result for a case of the cake cutting setting that is of independent interest. In the following, a {\em mixed cake} is the interval $[0,m]$. Each agent $i \in N$ has a value density function ${\hat u}_i$, which maps a subinterval of the cake to a real value, where $i$ has uniform utility $u_i(o_j)$ for the interval $[j-1,j]$ for each $j \in [m]$. The \emph{proportional fair share} of agent $i$ for a mixed cake $[0,m]$ is given by ${\hat u}_i([0,m])/n$.
A {\em contiguous allocation} of a mixed cake assigns each agent a disjoint sub-interval of the cake where the union of the intervals equals the entire cake $[0,m]$; it satisfies {\em proportionality} if each agent $i$ gets an interval of utility at least her proportional fair share.

\begin{theorem}\label{thm:proportional}
For additive utilities, a contiguous proportional allocation of a mixed cake exists and can be computed in polynomial time.
\end{theorem}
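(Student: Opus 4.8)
The plan is to adapt the classical moving-knife / Dubins–Spanier argument for proportional cake cutting to the mixed setting, where a piece of cake can have positive density for some agents and negative density for others. The key subtlety is that in the pure-goods case one can always find a left-most piece worth exactly $u_i(O)/n$ to some agent because the value of a prefix interval $[0,x]$ is continuous and monotone in $x$; in the mixed case it is still continuous but no longer monotone, so the argument has to be rephrased in terms of the value of prefixes rather than in terms of a moving knife that only sweeps rightward.

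\medskip

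\textbf{Step 1: Reduce to a recursive ``first cut'' lemma.} I would prove the statement by induction on $n$. The base case $n=1$ is trivial: give the whole cake $[0,m]$ to the single agent. For the inductive step it suffices to show that we can cut off a prefix interval $[0,x]$ and assign it to some agent $i$ so that (a) agent $i$ receives value at least $u_i(O)/n = {\hat u}_i([0,m])/n$ from $[0,x]$, and (b) every remaining agent $j \neq i$ values the remainder $[x,m]$ at least $\frac{n-1}{n}\,{\hat u}_j([0,m])$, i.e.\ at least her proportional share among the remaining $n-1$ agents, $\frac{1}{n-1}$ of what she would get dividing $[x,m]$ among $n-1$ people \emph{is implied} once ${\hat u}_j([x,m]) \ge \frac{n-1}{n}{\hat u}_j([0,m])$. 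Then recurse on $[x,m]$ with the $n-1$ agents.

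\medskip

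\textbf{Step 2: Find the right cut point.} For each agent $i$, let $f_i(x) = {\hat u}_i([0,x])$, a continuous piecewise-linear function with $f_i(0)=0$, $f_i(m) = {\hat u}_i([0,m])$. Because of non-monotonicity I cannot simply take the first $x$ where some $f_i$ hits its target. Instead, the clean move is: consider the point $x^* = \inf\{ x : f_i(x) \ge {\hat u}_i([0,m])/n \text{ for some } i \in N\}$, and let $i^*$ be an agent achieving $f_{i^*}(x^*) = {\hat u}_{i^*}([0,m])/n$ (equality holds by continuity and minimality of $x^*$; note $x^*$ could be $0$ if some agent values the whole cake non-positively, in which case any agent with ${\hat u}_i([0,m])\le 0$ can be handed the empty prefix). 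Assign $[0,x^*]$ to $i^*$. Now for any other agent $j$: by minimality of $x^*$ we have $f_j(x) < {\hat u}_j([0,m])/n$ for all $x < x^*$, and by continuity $f_j(x^*) \le {\hat u}_j([0,m])/n$; hence ${\hat u}_j([x^*,m]) = {\hat u}_j([0,m]) - f_j(x^*) \ge \frac{n-1}{n}{\hat u}_j([0,m])$, which is exactly the hypothesis needed to continue the induction on the $n-1$ remaining agents and the subcake $[x^*,m]$.

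\medskip

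\textbf{Step 3: Connectivity, termination, and running time.} Each agent receives a single prefix of the then-current subcake, so the union of all assigned pieces is a partition of $[0,m]$ into $n$ contiguous intervals in the order the agents were removed — connectivity is automatic. The induction is over $n$ steps; in each step computing $x^*$ and $i^*$ amounts to finding, for each of the at most $n$ remaining agents, the smallest breakpoint of the piecewise-linear $f_i$ at which the target is reached (a linear scan over the $O(m)$ pieces), so the whole procedure is polynomial, roughly $O(mn)$ or $O(mn^2)$ depending on bookkeeping. I expect the \emph{main obstacle} to be handling the sign issues cleanly: when ${\hat u}_i([0,m]) < 0$ the proportional share is negative, and one must make sure that the agent singled out by the infimum argument really does meet her (possibly negative) target, and that $x^*$ is well defined even when no prefix ever reaches a positive target; the fix is to observe that any agent with ${\hat u}_i([0,m]) \le 0$ is satisfied by the empty interval (value $0 \ge {\hat u}_i([0,m])/n$), so we may assume all agents have strictly positive total value, in which case $f_i$ ranges continuously from $0$ up through ${\hat u}_i([0,m])/n$ and the intermediate value theorem applies. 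A secondary point to get right is that ``proportional share of the remainder among $n-1$ agents'' must be computed with respect to each agent's valuation of the \emph{remaining} subcake, but Step 2 shows this is guaranteed by the $\frac{n-1}{n}$ bound, so the induction hypothesis is exactly what is needed.
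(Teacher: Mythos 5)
Your Step 2 is sound and essentially reproduces the paper's treatment of the ``goods-like'' case: when every (remaining) agent has strictly positive total value, the infimum/first-to-shout cut is exactly the Dubins--Spanier argument the paper also uses, and your observation that positivity is preserved down the recursion (since $\hat u_j([x^*,m])\ge\frac{n-1}{n}\hat u_j([0,m])>0$) is correct. The gap is in your reduction in Step 3. The theorem requires a \emph{complete} contiguous allocation (the union of the assigned intervals must be all of $[0,m]$), so you cannot simply satisfy every agent with $\hat u_i([0,m])\le 0$ by an empty interval and ``assume all agents have strictly positive total value.'' If \emph{no} agent has positive total value (the cake is, on balance, a chore for everyone), your reduction leaves nobody to receive the cake; and you cannot dump the remainder on one agent, since a cake worth $-1$ to each of $n$ agents must be split so that everyone gets value at least $-1/n$. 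Nor does your inductive step repair this: to recurse you need each remaining agent $j$ to satisfy $\hat u_j(\text{prefix given away})\le \hat u_j([0,m])/n$, and for an agent with negative total value an empty (or too small) prefix has value $0>\hat u_j([0,m])/n$, so the inequality $\frac{1}{n-1}\hat u_j(\text{remainder})\ge\frac{1}{n}\hat u_j([0,m])$ fails --- peeling off agents cheaply makes the per-capita burden on the survivors strictly worse.

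This missing case is exactly where the paper's proof does something different: when $N^+=\emptyset$ it runs a \emph{mirrored} moving knife, giving the left piece to the agent whose indifference point $x_i$ (where the prefix is worth exactly her proportional, here non-positive, share) is \emph{maximal}. Maximality guarantees that every other agent values the surrendered prefix at no more than her share (otherwise, by continuity and the fact that her total is non-positive, she would have an indifference point further right), which is precisely the inequality your induction needs but cannot obtain with empty prefixes. So your proposal covers the $N^+\neq\emptyset$ branch of the paper's Algorithm 3 but lacks the second, reversed branch; to fix it you would need to add this max-cut rule (or an equivalent device) for subinstances in which all remaining agents have non-positive value for the remaining cake.
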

\begin{proof}
Let $N^+$ be the set of agents with strictly positive total utility for $O$. 

We combine the moving-knife algorithms for goods and chores as follows. First, if there is an agent who has positive proportional fair share, i.e., $N^+ \neq \emptyset$, we apply the moving-knife algorithm only to the agents in $N^+$. Our algorithm moves a knife from \emph{left} to \emph{right}, and agents shout whenever the left part of the cake has a utility of exactly equal to the proportional fair share. The first agent who shouts is allocated the left bundle, and the algorithm recurs on the remaining instance. Second, if no agent has a positive proportional fair share, our algorithm moves a knife from \emph{right} to \emph{left}, and agents shout whenever the left part of the cake has utility exactly proportional fair share. Again, the first agent who shouts is allocated the left bundle, and the algorithm recurs on the remaining instance. 

\begin{algorithm}                      
\caption{Generalized Moving-knife Algorithm $\calA$}         
\label{alg:movingknife}                          
\begin{algorithmic}[1]  
\REQUIRE A sub-interval $[\ell,r]$, agent set $N'$, utility functions ${\hat u}_i$ for each $i \in N'$.
\ENSURE An allocation ${\hat \pi}$ of a mixed cake $[\ell,r]$ to $N'$.
\STATE Initialize ${\hat \pi}(i)=\emptyset$ for each $i \in N'$.
\STATE Set $N^+=\{\, i \in N' \mid {\hat u}_i([\ell,r]) > 0\,\}$.
\IF{$N^+ \neq \emptyset$}
\IF{$|N^+|=1$}
\STATE Allocate $[\ell,r]$ to the unique agent in $N^+$.
\ELSE
\STATE Let $x_i$ be the minimum point where ${\hat u}_i([\ell,x_i])={\hat u}_i([\ell,r])/|N^+|$ for $i \in N^+$. 
\STATE Find agent $j$ with minimum $x_j$ among all agents in $N^+$. 
\RETURN ${\hat \pi}$ where ${\hat \pi}(j)=[\ell,x_{j}]$ and ${\hat \pi}|_{N'\setminus \{j\}}=\calA([x_{j},r],N' \setminus \{j\},({\hat u}_i)_{i \in N'\setminus \{j\}})$
\ENDIF
\ELSE 
\STATE Let $x_i$ be the maximum point where ${\hat u}_i([\ell,x_i])=-{\hat u}_i([\ell,r])/n$ for $i \in N'$. 
\STATE Find agent $j$ with maximum $x_j$ among all agents in $N'$.
\RETURN ${\hat \pi}$ where ${\hat \pi}(j)=[\ell,x_{j}]$ and ${\hat \pi}|_{N'\setminus \{j\}}=\calA([x_{j},r],N' \setminus \{j\},({\hat u}_i)_{i \in N'\setminus \{j\}})$
\ENDIF
\end{algorithmic}
\end{algorithm}

Algorithm \ref{alg:movingknife} formalizes the idea. To prove its correctness, we will prove by induction on the number of agents $|N'|$ that the allocation of a mixed cake $[\ell,r]$ produced by $\calA$ satisfies the following:
\begin{itemize}
\item if $N^+ \neq \emptyset$, then each agent in $N^+$ receives an interval of utility at least her proportional fair share ${\hat u}_i([\ell,r])/|N'|$ and each agent not in $N^+$ receives an empty piece; and
\item if $N^+ = \emptyset$, then each agent receives an interval of utility at least her proportional fair share ${\hat u}_i([\ell,r])/|N'|$. 
\end{itemize}
The claim is clearly true when $|N'|=1$. Suppose that $\calA$ returns a proportional allocation of a mixed cake with desired properties when $|N'| = k-1$; we will prove it for $|N'|=k$. 

Suppose that some agent has positive proportional fair share, i.e., $N^+ \neq \emptyset$. Note that each agent $i$ not in $N^+$ has non-positive proportional fair share and gets nothing; thus it suffices to show that the agents in $N^+$ receive bundles of utility at least her proportional fair share. If $|N^+|=1$, the claim is trivial; thus assume otherwise. Clearly, agent $j$ receives an interval of utility at least her proportional fair share. Further, all other agents in $N^+$ have utility at most their proportional fair shares for the left piece $[\ell,x_j]$. Indeed, if there is an agent $i' \in N^+$ whose utility for the left piece $[\ell,x_j]$ is greater than her proportional fair share ${\hat u}_{i'}([\ell,r])/k$, then $i'$ would have shouted when the knife reaches before $x_j$ by the continuity of ${\hat u}_{i'}$, i.e., $x_{i'}<x_j$, contradicting the minimality of $x_j$. Thus, the remaining agents in $N^+$ have at least $(k-1) \cdot {\hat u}_i([\ell,r])/k$ utility for the rest of the cake $[x_j,r]$; hence, by the induction hypothesis each agent in $N^+$ gets an interval of utility at least her proportional fair share, and each of the remaining agents gets an empty piece.

Suppose that no agent has positive proportional fair share. Again, if there is an agent $i'$ whose utility for the left piece $[\ell,x_j]$ is greater than her proportional fair share ${\hat u}_{i'}([\ell,r])/k$, then $i'$ would have shouted when the knife reaches before $x_j$ by the continuity of ${\hat u}_{i'}$, i.e., $x_{i'}>x_j$, contradicting the maximality of $x_j$. Thus, all the remaining agents have utility at least $(k-1) \cdot {\hat u}_i([\ell,r])/k$ for the rest of the cake $[x_j,r]$, and hence, by the induction hypothesis, each agent gets an interval of utility at least her proportional fair share ${\hat u}_i([\ell,r])/k$. 
It can be easily verified that Algorithm \ref{alg:movingknife} runs in polynomial time. 
\qed
\end{proof}

The theorem stated above also applies to a general cake-cutting model in which information about agent's utility function over an interval can be inferred by a series of queries. We note that in contrast with proportionality, the existence of a contiguous envy-free allocation of a mixed cake remains elusive: it is known to exist only when the number $n$ of agents is four or a prime number \citep{Halevi2018,MeSh18a}. Next, we show how a fractional proportional allocation can be used to achieve a contiguous PROP1 division of indivisible items. 	

			 
\begin{theorem}
For additive utilities, a connected PROP1$_{outer}$ allocation of a path always exists and can be computed in polynomial time. 
\end{theorem}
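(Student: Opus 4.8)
The plan is to combine Theorem~\ref{thm:proportional} with a rounding argument. First I would pass to the mixed cake $[0,m]$ in which item $o_j$ occupies the unit interval $[j-1,j]$, and apply Theorem~\ref{thm:proportional} to obtain a contiguous proportional fractional allocation; relabelling the agents along the cake, agent $k$ gets the interval $[t_{k-1},t_k]$ with $0=t_0\le t_1\le\cdots\le t_n=m$ and $\hat u_k([t_{k-1},t_k])\ge u_k(O)/n$. Each cut $t_k$ either lies on an integer (nothing to round there) or strictly inside some item $o_{j_k}$, which is then ``straddled'' by agents $k$ and $k+1$ (and, in the degenerate case that several cuts fall inside one item, by a whole block of agents). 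Items lying entirely inside an agent's interval are forced to that agent; the only freedom is how to hand each straddled item to one of its overlapping agents, and the final allocation is automatically connected because each bundle ends up being a block of fully-interior items together with at most one adjacent straddle item on each side.

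Next I would choose the rounding. The key observation is that if an agent's entire fractional bundle lies inside a single item $o$, then that agent is PROP1 in the rounded allocation whether it receives $o$ or the empty set, since its proportional share $(t_k-t_{k-1})\,u_k(o)$ is ``a fraction of'' $u_k(o)$ (add $o$ if $u_k(o)\ge 0$ and it was not given $o$; remove $o$ if $u_k(o)<0$ and it was; otherwise the first PROP1 bullet holds outright). Hence such agents can be ignored, and every straddled item has effectively two relevant candidates. I would then assign straddled items greedily from left to right, maintaining the invariant that \emph{agent $k$ loses no value at its left endpoint}: its left straddle item, if any, is given to it precisely when that item is a good for it. Concretely, when processing cut $t_k$, give its straddled item to agent $k+1$ iff that item is a good for agent $k+1$ (otherwise give it to agent $k$, who is already safe on its left, or to an irrelevant inner agent). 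Since $t_0=0$ is an integer, agent $1$ satisfies the invariant trivially, and the greedy propagates it to agents $2,\dots,n$; one must also check that these choices keep the rounded cut points monotone, which is exactly where the ``single item $\Rightarrow$ automatically PROP1'' observation is used to absorb runs of cuts inside one item.

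Finally I would verify PROP1 for each agent $k$ by writing $u_k(\pi(k))-u_k(O)/n$ as a left-endpoint term plus a right-endpoint term. By the invariant the left term is $\ge 0$ (agent $k$ either keeps a good it partly held, sheds a chore, or has an integer left cut). If the right term is also $\ge 0$, then $u_k(\pi(k))\ge u_k(O)/n$. Otherwise the right straddle item $o$ is either a good not given to $k$, in which case $u_k(\pi(k))+u_k(o)\ge u_k(O)/n$ because $o$ restores at least the fraction of it that was dropped, or a chore given to $k$, in which case $u_k(\pi(k))-u_k(o)\ge u_k(O)/n$ symmetrically; either way PROP1 holds. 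Polynomial time follows since Theorem~\ref{thm:proportional} is polynomial and the rounding is a single left-to-right scan of the at most $n-1$ cuts. The main obstacle is precisely this rounding analysis: keeping the integer cuts consistent when several fractional cuts land inside the same item, and confining each agent's shortfall to a single item so that one PROP1 operation suffices.
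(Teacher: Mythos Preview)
Your proposal is correct and follows essentially the same approach as the paper: start from the contiguous proportional fractional division of Theorem~\ref{thm:proportional}, observe that agents whose whole interval lies inside a single item are automatically PROP1 and can receive the empty bundle, and then round each remaining straddled item so that at most one endpoint per agent can incur a shortfall, which a single PROP1 operation then repairs.

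The only difference is the orientation of the rounding rule. The paper resolves each boundary item in favour of the \emph{left} neighbour (left agent takes a shared good, right agent absorbs a shared chore, and in the mixed-sign case the agent who likes it takes it); consequently every agent is ``safe'' on its right endpoint and the possible shortfall sits at the left, so PROP1 is witnessed by taking the item just to the left of the bundle or deleting the bundle's leftmost item. You resolve each boundary item in favour of the \emph{right} neighbour (give $o$ to $k+1$ iff $u_{k+1}(o)\ge 0$), so your invariant makes the left endpoint safe and pushes the shortfall to the right. The two arguments are mirror images of one another. Your write-up is in fact a bit more explicit than the paper's about why a single item suffices (the $L_k+R_k$ decomposition) and about the degenerate ``several cuts in one item'' case; the paper simply declares that agents whose fractional share is a fragment of one item get nothing and then treats every straddled item as if it were shared by exactly two agents.
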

\begin{proof}
Given a path $(o_1,o_2,\ldots,o_m)$, consider a mixed cake $[0,m]$ and each agent with a value density function ${\hat u}_i$, where $i$ has uniform utility $u_i(o_j)$ for the interval $[j-1,j]$ for each $j \in [m]$. We know that this instance admits a contiguous and proportional allocation ${\hat \pi}$ from Theorem \ref{thm:proportional}. 
Suppose without loss of generality that under such an allocation ${\hat \pi}$, agents $1, 2,\ldots, n$ receive the 1st, 2nd, $\ldots$, and $n$-th bundles from left to right. That is, each agent $i=1,2,\ldots,n$ receives the sub-interval $[x_{i-1},x_i]$ of the mixed cake, where $0=x_0 \leq x_1  \leq  \ldots  \leq x_{n-1} \leq x_n=m$. Without loss of generality, we also assume that no agent gets the empty bundle under this fractional allocation, i.e., $x_{i-1}< x_i$ for each $i=1,2,\ldots, n$.

From left to right, we show how to allocate each item $o_j$ for $j=1,2,\ldots,m$ to construct an integral allocation ${\pi}$. If item $o_j$ is fully contained in some agent's bundle, namely, $x_{i-1} \leq  j-1   \leq j \leq x_i$ for some $i \in N$, then we assign each item $o_j$ to agent $i$. If not, i.e., the item $o_j$ is on the boundary, we allocate it according to the left-most/right-most agents' preferences. Formally, suppose that $j-1 \leq x_{\ell} \leq x_{\ell+1} \leq \ldots \leq x_r \leq j$ such that $x_{\ell}=\min \{\, x_i \mid x_i \geq j-1 \,\}$ and $x_r=\max \{\, x_i \mid x_i \leq j \,\}$. Then we do the following: 
\begin{enumerate}
\item If two agents $\ell$ and $r$ disagree on the sign of $o_j$, i.e., $\min \{u_{\ell}(o_j), u_{r}(o_j)\} <0 < \max \{u_{\ell}(o_j), u_{r}(o_j)\}$, we give the item $o_j$ to the agent $i \in \{\ell,r\}$ who has positive utility for it.
\item If two agents $\ell$ and $r$ agree on the sign of $o_j$, i.e., $\min \{u_{\ell}(o_j), u_{r}(o_j)\} \geq 0$ or $\max \{u_{\ell}(o_j), u_{r}(o_j)\} < 0$, we allocate the item $o_j$ in such a way that: 
\begin{itemize}
\item the left-agent $\ell$ takes $o_j$ if both agents have non-negative utility, i.e., $\min \{u_{\ell}(o_j), u_{r}(o_j)\} \geq 0$;
\item the right-agent $r$ takes $o_j$ if both agents have negative utility, i.e., $\max \{u_{\ell}(o_j), u_{r}(o_j)\} < 0$.
\end{itemize}
\end{enumerate}
Note that if there is an agent who gets a fraction of one item only under the proportional fractional division, the agent gets nothing under our final allocation. 

The resulting integral allocation $\pi$ is PROP1$_{outer}$. To see this, take any agent $i$. Clearly, when one of the knife positions $x_{i-1}$ and $x_i$ is integral, the bundle satisfies PROP$_{outer}$. Further, if $[x_{i-1},x_j] \subseteq [j-1,j]$ for some $j \in [m]$, agent $i$ gets utility $1/n$ by receiving either the item $o_j$ or the empty bundle.
Thus, assume otherwise, i.e., $x_{i-1},x_i \not \in \{0,1,\ldots,m\}$ and $|x_i-x_{i-1}|>1$. We will show that such an agent gets utility $1/n$ by either receiving the item next to its bundle or deleting the left-most item of her bundle. Let $o_r$ and $o_{\ell}$ be the left and right boundary items where $x_{i-1} \in (r-1,r)$ and $x_{i} \in (\ell-1,\ell)$. Note that we have $\{o_{\ell+1},o_{\ell+2},\ldots,o_{r-1}\} \subseteq \pi(i)$. 
Consider the following four cases. 
\begin{itemize}
    \item Both $o_{\ell}$ and $o_r$ are goods for $i$, i.e., $\min \{u_i(o_{\ell}),u_i(o_r)\} \geq 0$. In this case, agent $i$ receives at least $o_r$. Thus, if $o_{\ell} \in \pi(i)$, agent $i$ obtains utility $1/n$. If not, agent $i$ gets utility $1/n$ by receiving the item $o_{\ell}$. 
    \item Both $o_{\ell}$ and $o_r$ are chores for $i$, i.e., $\max \{u_i(o_{\ell}),u_i(o_r)\} < 0$. In this case, agent $i$ does not receive $o_r$. Thus, if $o_{\ell} \not \in \pi(i)$, agent $i$ obtains utility $1/n$. If not, agent $i$ gets utility $1/n$ by removing the item $o_{\ell}$. 
    \item The item $o_{\ell}$ is a good but $o_{r}$ is a chore for $i$, i.e., $u_i(o_{\ell}) \geq 0$ and $u_i(o_r) <0$. In this case, agent $i$ does not receive $o_r$. Thus, if $o_{\ell} \in \pi(i)$, agent $i$ obtains utility $1/n$. If not, agent $i$ gets utility $1/n$ by receiving the item $o_{\ell}$.
    \item The item $o_{\ell}$ is a chore but $o_{r}$ is a good for $i$, i.e., $u_i(o_{\ell}) < 0$ and $u_i(o_r) \geq 0$. In this case, agent $i$ receives at least $o_r$. Thus, if $o_{\ell} \not \in \pi(i)$, agent $i$ obtains utility $1/n$. If not, agent $i$ gets utility $1/n$ by removing the item $o_{\ell}$.
\end{itemize}
We conclude that $\pi$ is a connected PROP1$_{outer}$ allocation. By Theorem \ref{thm:proportional}, it is immediate to see that one can compute a connected PROP1$_{outer}$ allocation in polynomial time.\qed
\end{proof}
						 
\section{Discussion}
In this paper, we have formally studied fair allocation when the items are a combination of subjective goods and chores. 
Our work paves the way for detailed examination of allocation of goods/chores, and opens up an interesting line of research, with many problems left open to explore.
Perhaps the most intriguing open question as a result of our study is the existence of an EF1 allocation under arbitrary non-monotonic utilities.

There are further fairness concepts that could be studied from both existence and complexity issues, most notably {\em envy-freeness up to the least valued item} (EFX) \citep{CKMPS19}. In our setting, one can define an allocation $\pi$ to be \emph{EFX} if for any pair of agents $i$ and $j$, agent $i$ does not envy agent $j$, or the following two conditions hold: 

\begin{enumerate}
\item $\forall o\in \pi(i)$ s.t. $u_i(\pi(i)\setminus \{o\})>u_i(\pi(i))$: $u_i(\pi(i)\setminus \{o\})\geq u_i(\pi(j))$; and
\item $\forall o\in \pi(j)$ s.t. 
$u_i(\pi(j)\setminus \{o\})<u_i(\pi(j))$
: $u_i(\pi(i))\geq u_i(\pi(j)\setminus \{o\})$.
\end{enumerate} 

That is, $i$'s envy towards $j$ can be eliminated by either removing $i$'s least valuable good from $j$'s bundle or removing $i$'s favorite chore from $i$'s bundle. 
\citet{CKMPS19} mentioned the following `enigmatic' problem:
does an EFX allocation exist for goods? It would be interesting to investigate the same question for subjective or objective goods/chores under additive utilities.

We also note that while the relationship between Pareto-optimality and most fairness notions is still unclear, \citet{CFS17a} proposed a fairness concept called {\em round robin share} (RRS) that can be achieved together with Pareto-optimality. In our context, RRS can be formalized as follows. Given an instance $I=(N,O, U)$, consider the round robin sequence in which all agents have the same utilities as agent $i$. In that case, the minimum utility achieved by any of the agents is RRS$_i(I)$. 
An allocation satisfies RRS if each agent $i$ gets utility at least RRS$_i(I)$. It would be then very natural to ask what is the computational complexity of finding an allocation satisfying both properties. 

Finally, recent papers of \citet{BCE17a} and Bil\`{o} et al. $(2019)$ 
showed that a connected allocation satisfying several fairness notions, such as MMS and EF1, is guaranteed to exist for some restricted domains. These existence results crucially rely on the fact that the agents have monotonic valuations, and it remains open whether similar results can be obtained in fair division of indivisible goods and chores.  

\section*{Acknowledgements}
We would like to thank \citet{brczi2020envyfree}
and \citet{BSV20a}
for pointing out an error of the IJCAI version \citep{ijcai2019-8}. We thank the IJCAI 2019 reviewers and Warut Suksompong for helpful feedback. Ayumi Igarashi is supported by the KAKENHI Grant-in-Aid for JSPS Fellows number 18J00997. Toby Walsh is funded by the European Research Council under the Horizon 2020 Programme via AMPLify 670077.


%
%

\bibliographystyle{plainnat}

\end{document}